\documentclass[lettersize,journal]{IEEEtran}
\IEEEoverridecommandlockouts

\usepackage{amssymb}
\usepackage[cmex10]{amsmath}
\usepackage{stfloats}
\usepackage{graphicx}
\usepackage{subfigure}
\usepackage{tabularx}
\usepackage{epsfig,epsf,color,balance,cite}
\usepackage{verbatim}
\usepackage{url}
\usepackage{bm}
\usepackage{booktabs}

\usepackage{amsmath}
\allowdisplaybreaks[2]
\usepackage{ntheorem}
\newtheorem{theorem}{Theorem}
\newtheorem{lemma}{Lemma}
\newtheorem*{proof}{Proof}

\usepackage{algorithm}
\usepackage{algorithmic}

\hyphenation{net-works}

\usepackage{color}
\definecolor{myc1}{rgb}{0,0,0}

\begin{document}

\title{A Joint Communication and Computation Design for Probabilistic Semantic Communications}

\author{Zhouxiang Zhao,~\IEEEmembership{Graduate Student Member,~IEEE,}
        Zhaohui Yang,~\IEEEmembership{Member,~IEEE,}
        Mingzhe Chen,\\ \IEEEmembership{Member,~IEEE,}
        Zhaoyang Zhang,~\IEEEmembership{Senior Member,~IEEE,}
        and H. Vincent Poor,~\IEEEmembership{Life Fellow,~IEEE}
\thanks{Z. Zhao, Z. Yang, and Z. Zhang are with College of Information Science and Electronic Engineering, Zhejiang University, and also with Zhejiang Provincial Key Laboratory of Info. Proc., Commun. \& Netw. (IPCAN), Hangzhou, China (e-mails: \{zhouxiangzhao, yang\_zhaohui, ning\_ming\}@zju.edu.cn).}
\thanks{M. Chen is with Department of Electrical and Computer Engineering and Institute for Data Science and Computing, University of Miami, Coral Gables, FL, 33146, USA (e-mail: mingzhe.chen@miami.edu).}
\thanks{H. V. Poor is with the Department of Electrical and Computer Engineering, Princeton University, Princeton, NJ 08544, USA (e-mail: poor@princeton.edu).}
}



\maketitle

\begin{abstract}
In this paper, the problem of joint transmission and computation resource allocation for a multi-user probabilistic semantic communication (PSC) network is investigated. In the considered model, users employ semantic information extraction techniques to compress their large-sized data before transmitting them to a multi-antenna base station (BS). Our model represents large-sized data through substantial knowledge graphs, utilizing shared probability graphs between the users and the BS for efficient semantic compression. The resource allocation problem is formulated as an optimization problem with the objective of maximizing the sum of equivalent rate of all users, considering total power budget and semantic resource limit constraints. The computation load considered in the PSC network is formulated as a non-smooth piecewise function with respect to the semantic compression ratio. To tackle this non-convex non-smooth optimization challenge, a three-stage algorithm is proposed where the solutions for the receive beamforming matrix of the BS, transmit power of each user, and semantic compression ratio of each user are obtained stage by stage. Numerical results validate the effectiveness of our proposed scheme.
\end{abstract}

\begin{IEEEkeywords}
Semantic communication, resource allocation, knowledge graph, probability graph.
\end{IEEEkeywords}
\IEEEpeerreviewmaketitle

\section{Introduction}
\IEEEPARstart{T}{he} rapid development of wireless communication technology has initiated an era of unprecedented connectivity \cite{10024766} that brings with it a growing complexity of data transmission. Moreover, the principles of information theory have undeniably shaped modern communication systems. While this model has been invaluable, it inherently falls short in capturing the richer semantic dimension of the information being exchanged \cite{9771334}. In response to the limitations of traditional information theory, the concept of semantic communication has emerged as a compelling technology \cite{9955525} to handle the growing complexity of data transmission. Semantic communication transcends the mere exchange of abstract symbols, instead placing an emphasis on the meaning and purpose of a message \cite{10233741}. Different from conventional communications that focuses on data rate maximization, semantic communications prioritizes data meaning transmission.

The advent of semantic communication has gained significant attention in the realm of communication research, representing a departure from established paradigms \cite{chaccour2022less}. However, despite its growing importance, the concept of semantic communication remains in a state of ongoing evolution \cite{10000901} characterized by the lack of a universally accepted definition, a comprehensive theoretical framework, and a unified understanding \cite{9679803}. Research in this field is exploratory, reflecting the challenges and opportunities of semantic communication in modern communication systems.

To achieve the advantages of semantic communication, one of the intriguing challenges is how to effectively obtain key performance indicators (KPIs) for performance evaluation. These KPIs include various aspects such as semantic computation consumption, quality of semantic information extraction, and semantic capacity. Current research mainly employs two methodologies to derive KPIs in semantic communication. The first approach relies on simulation, where semantic-related metrics, such as semantic rate, are obtained utilizing functions derived from simulation results \cite{9763856,9953095,hu2023multiuser,9398576}. The second approach involves analysis, where expressions related to semantic communication, such as semantic computation consumption, are derived through theoretical analysis \cite{10333452,yang2023secure,zhao2023joint,yang2023energy}. In simulation-based studies, Yan et al. achieved maximum spectral efficiency by optimizing channel assignment and the number of semantic symbols \cite{9763856,10001594}. Addressing energy efficiency, the authors in \cite{10012845} conducted optimization for total energy consumption under latency constraints. Cang et al. integrated semantic communication with mobile edge computing (MEC), minimizing energy consumption by optimizing semantic-aware division factors and managing communication and computation resources \cite{cang2023resource}. In analysis-based studies, the authors in \cite{10333452} optimized the total energy of the entire system through strategic semantic level selections.

In addition to characterizing the KPIs of semantic communication, the representation of semantic information is also a challenging aspect of semantic communication \cite{10183794}. Although many approaches use auto-encoders for semantic compression \cite{9953076,9953316,9450827}, resulting in data of small size that is considered to be semantic information, this output often lacks interpretability and cannot be directly validated by interaction with human understanding. To address this limitation, some works \cite{10061867,9685056} proposed the use of knowledge graphs as a representation method aligned with human logic. A knowledge graph generally consists of a set of nodes connected by edges \cite{9357868}. Each node represents an entity, which can be a real-world object, a concept, a temporal reference, etc. The edges represent the semantic relationship between these entities. An illustrative example of a knowledge graph is shown in Fig.~\ref{fg:kg}. Notably, knowledge graphs efficiently encapsulate substantial information within a compact data size, making them an ideal candidate for semantic information representation.

\begin{figure}[t]
\centering
\includegraphics[width=0.9\linewidth]{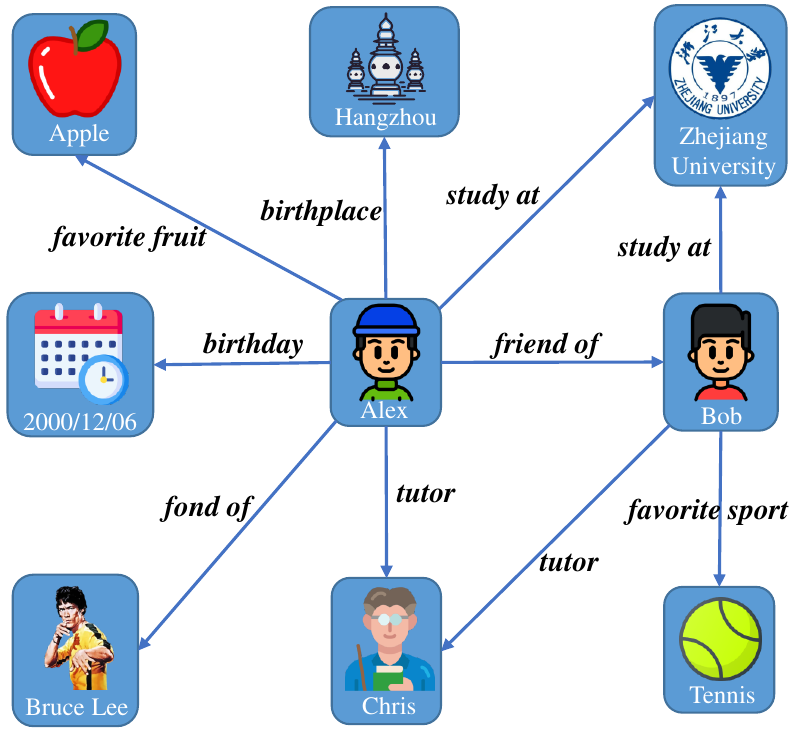}
\caption{Illustration of a knowledge graph.}
\label{fg:kg}
\end{figure}

Recently, there has been significant research investigating semantic communication over wireless networks. The authors in \cite{8461983} introduced deep learning techniques to joint source-channel coding of text, which laid the foundation of a semantic communication system for text transmission. This research offered novel perspectives and methods for effectively encoding and transmitting textual information. Building upon this, Yao et al. further explored the design of text transmission by proposing an iterative semantic coding approach \cite{9834044}. The objective of this approach was to accurately capture and transmit the semantic content of text, thereby enhancing the efficiency and accuracy of transmission. Further, semantic triples and knowledge graphs have been employed to enable semantic communication. Liu et al. investigated a task-oriented semantic communication approach based on semantic triples \cite{10118916}. This approach focused on effectively encoding and transmitting key semantic information based on specific task requirements. Additionally, the work in \cite{9838470} proposed a cognitive semantic communication framework with knowledge graphs. This work presented a simple, general, and interpretable solution for detecting semantic information by utilizing triples as semantic symbols. Considering the unique property of semantic communication, resource allocation and performance optimization are crucial factors to consider in the development of semantic communication systems. Wang et al. employed deep reinforcement learning to address the resource allocation problem in semantic communication \cite{9832831}. This study introduced new strategies to effectively allocate communication resources to ensure efficient transmission of semantic information. However, the aforementioned works \cite{8461983,9834044,10118916,9838470,9832831} did not take into account the computational power requirements of semantic communication systems, which is important for energy-constrained wireless networks \cite{6861946}.

In this paper, we develop a multi-user probabilistic semantic communication (PSC) framework that jointly considers transmission and computation consumption. The key contributions of this work are summarized as follows:
\begin{itemize}
\item We consider a PSC network in which multiple users employ semantic information extraction techniques to compress their original large-sized data and transmit the extracted information to a multi-antenna base station (BS). In our model, users' large-sized data is represented by extensive knowledge graphs and is compressed based on the shared probability graph between the users and the BS.
\item We formulate an optimization problem that aims to maximize the sum equivalent rate of all users while considering total power and semantic resource limit constraints. This joint optimization problem takes into account the trade-off between transmission efficiency and computation complexity.
\item To solve this non-convex non-smooth problem, a low-complexity three-stage algorithm is proposed. In stage 1, the receive beamforming matrix is optimized using the minimum mean square error (MMSE) strategy. In stage 2, we substitute the transmit power with the semantic compression ratio and develop an alternating optimization (AO) method to perform a rough search for the semantic compression ratio. In stage 3, gradient ascent is used to refine the semantic compression ratio. Numerical results show the effectiveness of the proposed algorithm.
\end{itemize}

The remainder of this paper is organized as follows. The system model and problem formulation are described in Section \uppercase\expandafter{\romannumeral2}. The algorithm design is presented in Section \uppercase\expandafter{\romannumeral3}. Simulation results are analyzed in Section \uppercase\expandafter{\romannumeral4}. Conclusions are drawn in Section \uppercase\expandafter{\romannumeral5}.

\section{System Model and Problem Formulation}

\begin{figure}[t]
\centering
\includegraphics[width=\linewidth]{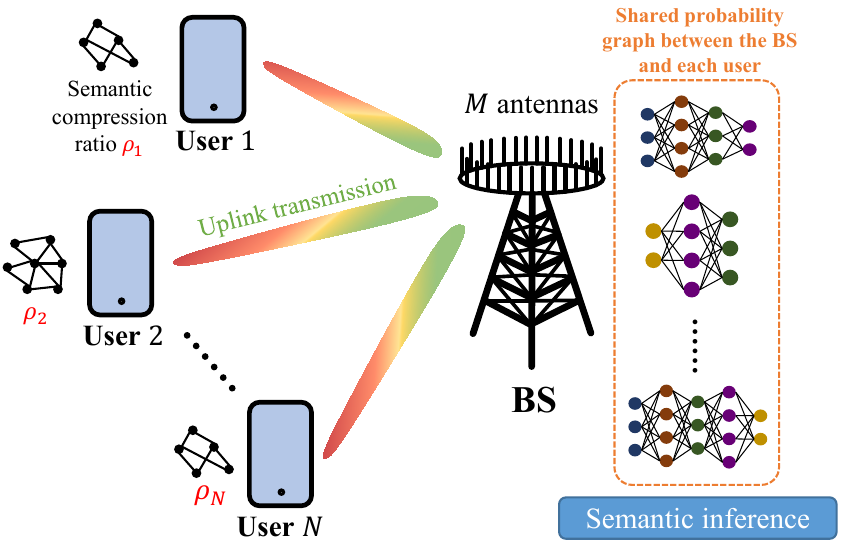}
\caption{An illustration of the considered PSC network.}
\label{fg:sm}
\end{figure}

Consider an uplink wireless PSC network with one multi-antenna BS and $N$ single-antenna users, as shown in Fig.~\ref{fg:sm}. The BS is equipped with $M$ antennas, and the set of users is represented by $\mathcal{N}$. Each user, denoted by $n$, has a large-sized data $\mathcal{D}_n$ to be transmitted. Due to limited wireless resource, the users need to extract the small-sized semantic information $\mathcal{C}_n$ from the original data $\mathcal{D}_n$. In the considered model, users first extract the semantic information based on their individual local probability graphs and then transmit the semantic data to the BS.

\subsection{Semantic Communication Model}

\begin{figure}[t]
\centering
\includegraphics[width=\linewidth]{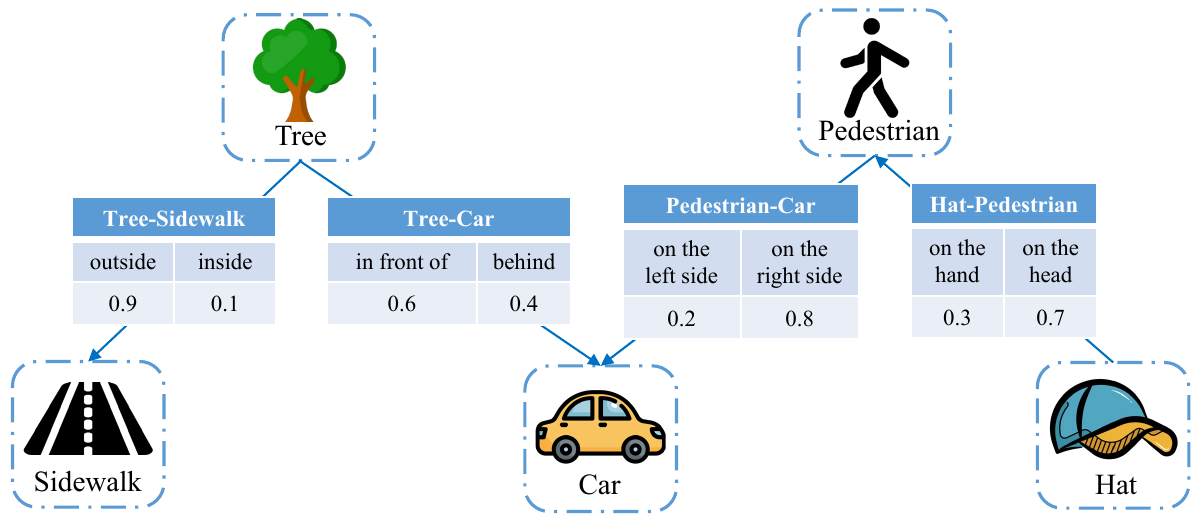}
\caption{Illustration of the probability graph considered in the PSC system.}
\label{fg:pg}
\end{figure}

 We employ probability graphs as the knowledge base between the semantic transmitter (each user) and the semantic receiver (BS). A probability graph integrates information from multiple knowledge graphs, extending the conventional knowledge graph by introducing the dimension of relational probability. An illustrative example of a probability graph is depicted in Fig.~\ref{fg:pg}. A traditional knowledge graph comprises numerous triples, and each triple can be represented by
\begin{equation}\label{triple}
    \varepsilon = (h, r, t),
\end{equation}
where $h$ is the head entity, $t$ denotes the tail entity, and $r$ represents the relation between $h$ and $t$. In a traditional knowledge graph, the relations are typically fixed. In contrast, in a probability graph, each relation is associated with a specific probability, representing the likelihood of that particular relation occurring under the given conditions of fixed head entity and tail entity.

We assume that each user needs to transmit several knowledge graphs. These knowledge graphs are generated from extensive textual data (picture/audio/video data can also be applied) after undergoing named entity recognition (NER) \cite{9039685} and relation extraction (RE) \cite{9446853}, resulting in abstracted information. Using the shared probability graph between a user and the BS, one can further compress the transmitted knowledge graphs.

The probability graph extends the dimensionality of relations by statistically enumerating the occurrences of various relations associated with the same head and tail entities across diverse knowledge graph samples. Leveraging the statistical information from the probability graph, a multidimensional conditional probability matrix can be constructed. This matrix reflects the likelihood of a specific triple being valid under the condition that certain other triples are valid. This enables the omission of relations in the knowledge graph before transmission, resulting in data compression. However, it is crucial to note that achieving a smaller data size necessitates a lower semantic compression ratio, which demands higher-dimensional conditional probabilities. This decrease in semantic compression ratio comes at the cost of increased computational load, thus presenting a trade-off between communication and computation for the considered PSC network. The specific implementation details of the probability graph can be found in \cite{10333452}.

\begin{figure*}[t]
\centering
\includegraphics[width=\linewidth]{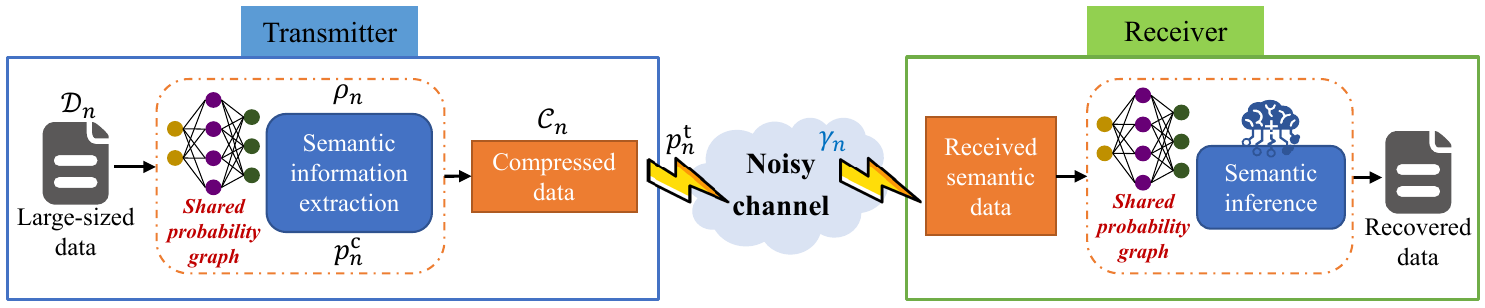}
\caption{The framework of considered PSC network.}
\label{fg:fw}
\end{figure*}

Within the framework of the considered PSC network, each user possesses a personalized local probability graph that stores statistical information about their historical data. Each user $n$ individually performs semantic information extraction, compressing original large-sized data $\mathcal{D}_n$ based on its stored probability graph with the semantic compression ratio denoted by $\rho_n$. Subsequently, the obtained compressed data, $\mathcal{C}_n$, is transmitted to the BS with transmit power $p_n^\mathrm{t}$. Meanwhile, the BS maintains identical probability graphs corresponding to all $N$ users. Once the BS receives the semantic data from user $n$, it conducts semantic inference to recover the compressed semantic information using the shared probability graph of user $n$. The overall framework of the considered PSC network is depicted in Fig.~\ref{fg:fw}.

\subsection{Transmission Model}
As mentioned above, the BS is equipped with $M$ antennas to serve $N$ single-antenna users. We assume that the number of users is not greater than the number of antennas in the BS, that is, $N\leq M$. Therefore, space-division multiple access (SDMA) can be employed.

We consider the uplink transmission from all users to the BS, and the received signal at the BS can be mathematically represented by
\begin{equation}\label{rs}
    \mathbf{y}=\mathbf{W}^\mathrm{H}\mathbf{H}\mathbf{x}+\mathbf{W}^\mathrm{H}\mathbf{n},
\end{equation}
where $\mathbf{W}=[\mathbf{w}_1,\mathbf{w}_2,\cdots,\mathbf{w}_N]\in \mathbb{C}^{M\times N}$ represents the receive beamforming matrix at the BS, with $\mathbf{w}_n\in \mathbb{C}^{M\times 1}$ being the receive beamforming vector for user $n$. The matrix $\mathbf{H}=[\mathbf{h}_1,\mathbf{h}_2,\cdots,\mathbf{h}_N]\in \mathbb{C}^{M\times N}$ denotes the multiple access channel matrix from all $N$ users to the antenna array of the BS. Each vector $\mathbf{h}_n\in \mathbb{C}^{M\times 1}$ represents the channel vector between the BS and user $n$, and is determined by the specific propagation environment. Here, we assume $[\mathbf{H}]_{i,j}\sim\mathcal{CN}(0,\beta)$ where $[\mathbf{\cdot}]_{i,j}$ denotes an element in a matrix and $\beta$ signifies the long-term channel power gain. The vector $\mathbf{x}=[x_1,x_2,\cdots,x_N]^\mathrm{T}\in \mathbb{C}^{N\times 1}$ denotes the transmitted signals of the users with transmit power $\mathbf{p}=[p_1^\mathrm{t},p_2^\mathrm{t},\cdots,p_N^\mathrm{t}]^\mathrm{T}$, where the transmit power of user $n$ is denoted by $p_n^\mathrm{t}$. The vector $\mathbf{n}=[n_1,n_2,\cdots,n_M]^\mathrm{T}$ represents additive white Gaussian noise (AWGN) at the BS. We assume that $[\mathbf{n}]_{i}\sim\mathcal{CN}(0,\sigma^2)$, where $[\mathbf{\cdot}]_{i}$ denotes an element in a vector, and $\sigma^2$ denotes the average noise power.

For the uplink transmission that utilizes linear combining at the BS, the received signal-to-interference-plus-noise ratio (SINR) for the signal from user $n$ can be given by
\begin{equation}\label{sinr}
    \gamma_n = \frac{\left\vert \mathbf{w}^\mathrm{H}_n \mathbf{h}_n\right\vert^2 p_n^\mathrm{t}}{\sum\limits^N_{k=1,k\neq n}\left\vert \mathbf{w}^\mathrm{H}_n \mathbf{h}_k\right\vert^2 p_k^\mathrm{t} + \left\Vert \mathbf{w}_n\right\Vert^2_2\sigma^2},
\end{equation}
and the achievable rate of user $n$ can be expressed as
\begin{equation}\label{ar}
    C_n=\log_2(1+\gamma_n).
\end{equation}

In the considered PSC network, the original large-sized data $\mathcal D_n$ is compressed into a small-sized data $\mathcal C_n$ with a semantic compression ratio prior to transmission. The semantic compression ratio for user $n$ is defined as
\begin{equation}\label{cr}
    \rho_n=\frac{\mathrm{size}(\mathcal{C}_n)}{\mathrm{size}(\mathcal{D}_n)},
\end{equation}
where the function $\mathrm{size}(\cdot)$ quantifies the data size in terms of bits.

Hence, we can calculate an equivalent rate for user $n$, denoted by
\begin{equation}\label{er}
    R_n = \frac{1}{\rho_n}C_n,
\end{equation}
which represents the transmission rate perceived by the receiver following the process of decoding. Due to the fact that one bit in the compressed data $\mathcal C_n$ can represent $1/\rho_n$ bits in the original data $\mathcal D_n$, we multiply the factor $1/\rho_n$ in equivalent expression \eqref{er}.

\subsection{Computation Model}
Each user $n$ needs to perform semantic information extraction based on their local probability graph to compress the original data $\mathcal{D}_n$ into a smaller-sized data $\mathcal{C}_n$. This operation relies on computational resources, and it is important to note that the lower the semantic compression ratio $\rho_n$, the higher the computation load becomes.

According to equation (19) in \cite{10333452}, the computation load for the considered probability graph-based PSC network can be expressed as
\begin{equation}\label{cl}
    g\left(\rho\right)=\left\{\begin{array}{l}
        A_1\rho +B_1, L_1< \rho \leq 1, \\
        A_2\rho +B_2, L_2< \rho \leq L_1, \\
        \vdots \\
        A_S\rho +B_S, L_S\leq \rho \leq L_{S-1}.
    \end{array}\right.,
\end{equation}
where $A_s<0$ represents the slope, $B_s>0$ stands for the constant term, and $L_s$ is the boundary for each segment $s=1, 2, \cdots, S$. These parameters are system-specific and are determined by the characteristics of the probability graphs. From \eqref{cl}, the computation load expression is a piecewise function, which is due to the fact that the semantic inference involves multiple levels of conditional probability functions and each level of conditional probability function results in one linear computation load expression.

\begin{figure}[t]
\centering
\includegraphics[width=\linewidth]{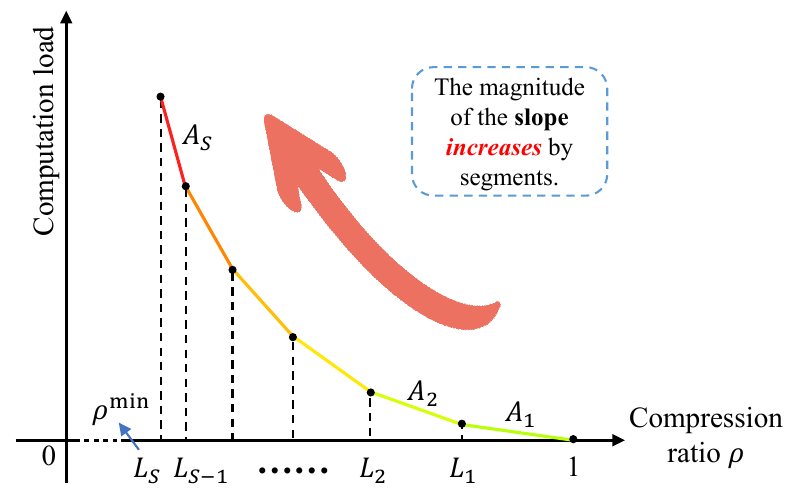}
\caption{Illustration of computation load versus semantic compression ratio $\rho$.}
\label{fg:cl}
\end{figure}

Based on \eqref{cl}, the computation load, denoted by $g(\rho)$, exhibits a segmented structure with $S$ levels, and the slope magnitude decreases in discrete segments, as depicted in Fig.~\ref{fg:cl}. This is because when the compression ratio is high, only low-dimensional conditional probabilities are employed, resulting in lower computational demands. However, as the compression ratio decreases, the need for higher-dimensional information arises. With higher information dimensions, the computation load becomes more intensive. Each transition in the segmented function $g(\rho)$ represents the utilization of probabilistic information with more information for semantic information extraction.

Given the piecewise property of the computation load function, the computation power of user $n$ can be written as
\begin{equation}\label{cp}
    p_n^\mathrm{c} = g_n(\rho_n)p_0,
\end{equation}
where $p_0$ represents a positive constant denoting the computation power coefficient, 
$g_n(\rho_n)=A_{ns} \rho_{n}+B_{ns}$, if $L_{ns}\leq \rho_n \leq L_{n(s-1)}$, $\forall s=1, 2, \cdots, S$, and $L_{ns}<L_{n(s-1)}<\cdots<L_{n1}<L_{n0}=1$.

In this paper, our primary focus is on the computation load at the user side, as we are specifically addressing the uplink transmission scenario. In this context, each user needs to perform an information transmission task, and as such, the computational overhead associated with semantic decoding at the BS is ignored since the BS always has high power budget.

\subsection{Problem Formulation}
Given the considered system model, our objective is to maximize the sum of equivalent rate for all users through jointly optimizing semantic compression ratio of each user, transmit power of each user, and receive beamforming matrix of the BS while considering the maximum total power of each user. The sum rate maximization problem can be formulated as
\begin{subequations}\label{pf}
    \begin{align}
        \max_{\bm{\rho},\mathbf{p},\mathbf{W}} \quad & \sum_{n=1}^N R_n, \tag{\ref{pf}}\\
        \textrm{s.t.} \quad & p_n^\mathrm{t}+p_n^\mathrm{c}\leq p_n^\mathrm{max},\forall n\in\mathcal{N}, \label{c1}\\
        & p_n^\mathrm{t}\geq 0,\forall n\in\mathcal{N}, \label{c2}\\
        & \rho_{n}^\mathrm{min}\leq\rho_n\leq 1,\forall n\in\mathcal{N}, \label{c3}
    \end{align}
\end{subequations}
where $\bm{\rho}=[\rho_1,\rho_2,\cdots,\rho_N]^\mathrm{T}$, $\mathcal N=\{1, 2, \cdots, N\}$, and $\rho_{n}^\mathrm{min}$ is the semantic compression limit for user $n$. Constraint \eqref{c1} reflects a limit on the sum of transmit power and computation power for user $n$, ensuring it remains within the overall power limit $p_n^\mathrm{max}$. Constraint \eqref{c2} enforces the non-negativity of user's transmit power. Lastly, constraint \eqref{c3} bounds the semantic compression ratio for each user.

It is essential to recognize that semantic compression ratio and transmit power are tightly coupled in problem \eqref{pf}. Smaller compression ratios lead to larger values of the objective function, but the presence of constraint \eqref{c1} limits the transmit power, consequently reducing the objective function.
Therefore, achieving the right balance between the effects of semantic compression ratio and transmit power is the key to the solution of problem \eqref{pf}. Another important aspect of problem \eqref{pf} is the inclusion of the segmented function $g_n(\rho_n)$ in constraint \eqref{c1}, which introduces a distinct challenge to the optimization process. Since the objective function is highly non-convex and constraint \eqref{c1} is non-smooth, it is generally hard to obtain the optimal solution of problem \eqref{pf} with existing optimization tools in polynomial time. Thus, we develop a suboptimal solution in the next section.

\section{Algorithm Design}
In this section, a three-step algorithm is proposed to solve problem \eqref{pf}, i.e., MMSE for receive beamforming matrix, rough search for semantic compression ratio, and refined search for semantic compression ratio. These three stages will be explained in detail below.

\subsection{Stage 1: MMSE for Receive Beamforming Matrix}
With the advancement of multiple-input multiple-output (MIMO) technology, various beamforming methods, including maximum ratio combining (MRC), zero-forcing (ZF), and MMSE, have been developed to deal with multi-user interference. In this section, we employ MMSE strategy to identify the receive beamforming matrix $\mathbf{W}$, which is effective in dealing with the high noise power situations. Based on the MMSE technique, the closed-form solution of receive beamforming matrix $\mathbf{W}$ is given in the following lemma.
\begin{lemma}\label{lemma1}
For any given transmit power of each user, i.e., $\mathbf{p}$, the optimal linear receive beamforming matrix $\mathbf{W}$ of the BS under MMSE strategy can be written as
\begin{equation}\label{mmse}
    \mathbf{W}(\mathbf{P}) = \left(\mathbf{H}\mathbf{P}\mathbf{H}^\mathrm{H}+\sigma^2\mathbf{I}_M\right)^{-1}\mathbf{H}\mathbf{P},
\end{equation}
where $\mathbf{P}=\mathrm{diag}\{\mathbf{p}\}$ represents a diagonal matrix with $[\mathbf{P}]_{i,i}=[\mathbf{p}]_{i}$, and $\mathbf{I}_M$ is an identical matrix of size $M\times M$.
\end{lemma}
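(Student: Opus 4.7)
The plan is to derive \eqref{mmse} by setting up and solving the classical linear MMSE estimation problem, since the BS output $\mathbf{y}=\mathbf{W}^{\mathrm H}(\mathbf{H}\mathbf{x}+\mathbf{n})$ is precisely a linear estimator of $\mathbf{x}$ applied to the observation $\mathbf{r}\triangleq \mathbf{H}\mathbf{x}+\mathbf{n}$. First I would fix $\mathbf{P}=\mathrm{diag}\{\mathbf{p}\}$ as given and state the MMSE cost
\begin{equation*}
J(\mathbf{W})=\mathbb{E}\!\left[\bigl\Vert \mathbf{W}^{\mathrm H}\mathbf{r}-\mathbf{x}\bigr\Vert_2^2\right],
\end{equation*}
using the standard modeling assumptions already in force in the paper, namely $\mathbb{E}[\mathbf{x}]=\mathbf{0}$ with $\mathbb{E}[\mathbf{x}\mathbf{x}^{\mathrm H}]=\mathbf{P}$ (per-user powers, users independent), $\mathbb{E}[\mathbf{n}]=\mathbf{0}$ with $\mathbb{E}[\mathbf{n}\mathbf{n}^{\mathrm H}]=\sigma^{2}\mathbf{I}_{M}$, and $\mathbf{x}\perp\mathbf{n}$.

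Next I would rewrite $J(\mathbf{W})$ as a trace of a Hermitian quadratic form in $\mathbf{W}$, expand, and identify the two second-order statistics that appear: the observation covariance $\mathbf{R}_{\mathbf{r}\mathbf{r}}=\mathbb{E}[\mathbf{r}\mathbf{r}^{\mathrm H}]=\mathbf{H}\mathbf{P}\mathbf{H}^{\mathrm H}+\sigma^{2}\mathbf{I}_{M}$ and the cross-covariance $\mathbf{R}_{\mathbf{r}\mathbf{x}}=\mathbb{E}[\mathbf{r}\mathbf{x}^{\mathrm H}]=\mathbf{H}\mathbf{P}$. The cost then reads
\begin{equation*}
J(\mathbf{W})=\mathrm{tr}\!\left(\mathbf{W}^{\mathrm H}\mathbf{R}_{\mathbf{r}\mathbf{r}}\mathbf{W}\right)-\mathrm{tr}\!\left(\mathbf{W}^{\mathrm H}\mathbf{R}_{\mathbf{r}\mathbf{x}}\right)-\mathrm{tr}\!\left(\mathbf{R}_{\mathbf{r}\mathbf{x}}^{\mathrm H}\mathbf{W}\right)+\mathrm{tr}(\mathbf{P}).
\end{equation*}
I would then take the Wirtinger derivative with respect to $\mathbf{W}^{*}$, obtaining the orthogonality/normal equation
\begin{equation*}
\mathbf{R}_{\mathbf{r}\mathbf{r}}\mathbf{W}-\mathbf{R}_{\mathbf{r}\mathbf{x}}=\mathbf{0},
\end{equation*}
and solve it, invoking positive definiteness of $\mathbf{R}_{\mathbf{r}\mathbf{r}}$ (guaranteed by the $\sigma^{2}\mathbf{I}_{M}$ term) to invert. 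Substituting the covariances yields exactly \eqref{mmse}. A quick second-order check, or noting that $J$ is a convex quadratic in $\mathbf{W}$ with positive-definite Hessian $\mathbf{R}_{\mathbf{r}\mathbf{r}}\otimes\mathbf{I}_{N}$, confirms the stationary point is the unique global minimizer.

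The bulk of the work is routine linear-algebra bookkeeping; the only subtle point, and what I would be most careful about, is the complex matrix-calculus step. I would either use Wirtinger derivatives with $\mathbf{W}$ and $\mathbf{W}^{*}$ treated as independent variables, or equivalently invoke the orthogonality principle for linear MMSE, $\mathbb{E}[(\mathbf{W}^{\mathrm H}\mathbf{r}-\mathbf{x})\mathbf{r}^{\mathrm H}]=\mathbf{0}$, which bypasses the derivative entirely and yields the same normal equation. Either route leads to \eqref{mmse} in a few lines, so I expect no real obstacle beyond presenting the derivation cleanly.
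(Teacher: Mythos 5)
Your proposal is correct and follows essentially the same route as the paper: the paper's Appendix A applies the orthogonality principle $\mathbb{E}\{(\mathbf{W}^{\mathrm H}\hat{\mathbf{y}}-\mathbf{x})\hat{\mathbf{y}}^{\mathrm H}\}=\mathbf{0}$, computes the same two covariances $\mathbf{H}\mathbf{P}\mathbf{H}^{\mathrm H}+\sigma^{2}\mathbf{I}_{M}$ and $\mathbf{P}\mathbf{H}^{\mathrm H}$, and solves the resulting normal equation, exactly as in your second (orthogonality-based) route. The Wirtinger-derivative variant you sketch is an equivalent way to reach the same normal equation, so there is no substantive difference.
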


\begin{proof}
See Appendix A.
$\hfill\square$
\end{proof}

According to Lemma \ref{lemma1}, the optimal MMSE receive beamforming is obtain as a closed-form solution, which is a function of the transmit power of all users. Based on the obtained $\mathbf{W}(\mathbf{P})$, we have
\begin{equation}\label{w}
    \mathbf{w}_n = p_n^\mathrm{t}\left(\mathbf{H}\mathbf{P}\mathbf{H}^\mathrm{H}+\sigma^2\mathbf{I}_M\right)^{-1}\mathbf{h}_n.
\end{equation}

For notation convenience, we define
\begin{equation}\label{u}
    U_{nk}\triangleq \left\vert \mathbf{w}^\mathrm{H}_n \mathbf{h}_k\right\vert^2= \left(p_n^\mathrm{t}\right)^2\left\vert \mathbf{h}_n^\mathrm{H}\left(\mathbf{H}\mathbf{P}\mathbf{H}^\mathrm{H}+\sigma^2\mathbf{I}_M\right)^{-1}\mathbf{h}_k\right\vert^2,
\end{equation}
and
\begin{equation}\label{v}
    v_n\triangleq \left\Vert \mathbf{w}_n\right\Vert^2_2\sigma^2 = \left(p_n^\mathrm{t}\sigma\right)^2\left\Vert \left(\mathbf{H}\mathbf{P}\mathbf{H}^\mathrm{H}+\sigma^2\mathbf{I}_M\right)^{-1}\mathbf{h}_n\right\Vert^2_2.
\end{equation}
Thus, by substituting \eqref{w} into \eqref{sinr}, the received SINR for the signal from user $n$ can be rewritten as
\begin{equation}\label{newsinr}
    \gamma_n = \frac{U_{nn} p_n^\mathrm{t}}{\sum\limits^N_{k=1,k\neq n}U_{nk} p_k^\mathrm{t} + v_n}.
\end{equation}

With the above variable substitution, problem \eqref{pf} can be reformulated as
\begin{subequations}\label{s1}
    \begin{align}
        \max_{\bm{\rho},\mathbf{p}} \quad & \sum_{n=1}^N \frac{1}{\rho_n}\log_2\left(1+\frac{U_{nn} p_n^\mathrm{t}}{\sum\limits^N_{k=1,k\neq n}U_{nk} p_k^\mathrm{t} + v_n}\right), \tag{\ref{s1}}\\
        \textrm{s.t.} \quad & p_n^\mathrm{t}+p_n^\mathrm{c}\leq p_n^\mathrm{max},\forall n\in\mathcal{N}, \label{c11}\\
        & p_n^\mathrm{t}\geq 0,\forall n\in\mathcal{N}, \label{c12}\\
        & \rho_{n}^\mathrm{min}\leq\rho_n\leq 1,\forall n\in\mathcal{N}. \label{c13}
    \end{align}
\end{subequations}

In this stage, the receive beamforming matrix $\mathbf{W}$ is optimized using MMSE strategy with a closed-form solution. Hence, the variables that require optimization in problem \eqref{pf} are reduced, and the problem we need to solve becomes problem \eqref{s1}.

\subsection{Stage 2: Rough Search for Semantic Compression Ratio}
In stage 2, we will roughly determine the semantic compression ratio $\rho_n$ for each user by identifying the segment in the piecewise function $g_n(\rho_n)$ where $\rho_n$ falls.

Without loss of generality, it is assumed that when the semantic compression ratio is equal to $\rho_{n}^\mathrm{min}$, the computation power $p_n^\mathrm{c}$ exceeds the total power limit $p_n^\mathrm{max}$, i.e.,
\begin{equation}\label{assumption}
    g_n(\rho_{n}^\mathrm{min})p_0\geq p_n^\mathrm{max},\forall n\in\mathcal{N}.
\end{equation}
This is because as the semantic compression ratio tends to $\rho_{n}^\mathrm{min}$, the computation load rises dramatically as the probability dimension of the computation becomes very high.

With the above assumption, the following theorem can be derived.
\begin{theorem}\label{theorem1}
The optimal semantic compression ratio $\rho_n^*$ and transmit power $\left(p_n^\mathrm{t}\right)^*$ of problem \eqref{s1} must satisfy
\begin{equation}\label{theorem}
    \left(p_n^\mathrm{t}\right)^*+g_n(\rho_n^*)p_0=p_n^\mathrm{max},\forall n\in\mathcal{N}.
\end{equation}
\end{theorem}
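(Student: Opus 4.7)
The plan is to prove \eqref{theorem} by contradiction. Suppose the optimum $(\bm{\rho}^*, \mathbf{p}^*)$ violates it for some user $n$, so that $(p_n^\mathrm{t})^* + g_n(\rho_n^*) p_0 < p_n^\mathrm{max}$ with strict slack. I will construct a small feasible perturbation whose objective value is strictly larger, contradicting optimality.

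The perturbation is to keep $\mathbf{p}^*$ and $\{\rho_k^*\}_{k\neq n}$ fixed and replace $\rho_n^*$ by $\rho_n^*-\delta$ for some sufficiently small $\delta>0$. First, I would rule out the boundary case $\rho_n^* = \rho_n^\mathrm{min}$: under assumption \eqref{assumption}, $g_n(\rho_n^\mathrm{min}) p_0 \geq p_n^\mathrm{max}$, which together with $p_n^\mathrm{t}\geq 0$ would already saturate \eqref{c11}, contradicting the assumed strict slack. Hence $\rho_n^* > \rho_n^\mathrm{min}$ and the decrease is permitted by \eqref{c13}. For feasibility in \eqref{c11}, I would exploit the fact that the segments in \eqref{cl} are the left-open intervals $(L_{ns}, L_{n(s-1)}]$: whichever segment $\rho_n^*$ lies in, the perturbed point $\rho_n^*-\delta$ remains in the same segment for all sufficiently small $\delta$, so $g_n(\rho_n^*-\delta) = g_n(\rho_n^*) + |A_{ns}|\delta$ — a continuous leftward change that can be absorbed by the strict slack. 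Constraint \eqref{c12} is untouched since $\mathbf{p}$ is unchanged.

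For the objective, note that $\mathbf{p}^*$ is fixed, so the MMSE beamformer from Lemma \ref{lemma1}, the induced coefficients $U_{nk}$ and $v_n$, and consequently every SINR $\gamma_k$ and rate $C_k$ stay exactly the same. The only term of $\sum_k R_k = \sum_k C_k/\rho_k$ that changes is $R_n = C_n/\rho_n$, and $1/\rho_n$ strictly increases as $\rho_n$ is decreased. Therefore the sum equivalent rate strictly grows, contradicting optimality of $(\bm{\rho}^*, \mathbf{p}^*)$, and \eqref{theorem} must hold for every $n$.

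The only delicate point I foresee is the non-smoothness of the piecewise function $g_n$. The argument works precisely because I choose a \emph{leftward} perturbation in $\rho_n$: since the segments of \eqref{cl} are left-open (except the very last), $g_n$ is continuous from the left and no discontinuous jump is triggered by a small decrease. An alternative strategy — perturbing $p_n^\mathrm{t}$ upward to consume the slack — would be considerably more awkward, because under MMSE the coefficients $U_{nk}$ and $v_n$ depend on the whole vector $\mathbf{p}$ and a change in $p_n^\mathrm{t}$ couples all users' SINRs through interference. Relying on the $\rho_n$-perturbation keeps every other user's contribution frozen and lets the half-open structure of \eqref{cl} do the work.
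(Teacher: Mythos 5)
Your proof takes essentially the same route as the paper's: a contradiction argument in which, given strict slack in \eqref{c11}, the compression ratio $\rho_n$ is decreased while all transmit powers are held fixed, so that only the factor $1/\rho_n$ in the objective changes and the sum of equivalent rate strictly increases. Your write-up is in fact more explicit than the paper's (spelling out the role of the left-open segments of $g_n$, the exclusion of the boundary case via \eqref{assumption}, and the invariance of $U_{nk}$ and $v_n$ under the perturbation), but the underlying idea is identical.
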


\begin{proof}
See Appendix B.
$\hfill\square$
\end{proof}

Theorem \ref{theorem1} implies that constraint \eqref{c11} will always hold with equality for optimality of problem \eqref{s1}. Based on Theorem \ref{theorem1}, we can substitute $p_n^\mathrm{t}=p_n^\mathrm{max}-g_n(\rho_n)p_0$ into problem \eqref{s1}. Thus, problem \eqref{s1} can be rewritten as
\begin{subequations}\label{s2}
    \begin{align}
        \max_{\bm{\rho}} \quad & \sum_{n=1}^N \frac{1}{\rho_n}\log_2\Bigg(1+\notag\\
        &\qquad\qquad \frac{U_{nn} \left[p_n^\mathrm{max}-g_n(\rho_n)p_0\right]}{\sum\limits^{N}_{k=1,k\neq n}U_{nk} \left[p_k^\mathrm{max}-g_n(\rho_k)p_0\right] + v_n}\Bigg), \tag{\ref{s2}}\\
        \textrm{s.t.} \quad & p_n^\mathrm{max}-g_n(\rho_n)p_0\geq 0,\forall n\in\mathcal{N}, \label{c21}\\
        & \rho_{n}^\mathrm{min}\leq\rho_n\leq 1,\forall n\in\mathcal{N}. \label{c22}
    \end{align}
\end{subequations}
Note that $U_{nk}$ and $v_n$ are variables associated with the transmit power $\mathbf{p}$ according to equations \eqref{u} and \eqref{v}. Since transmit power $p_n^\mathrm{t}$ is also a function of the semantic compression ratio $\rho_n$, $U_{nk}$ and $v_n$ become variables only associated with the semantic compression ratio $\bm{\rho}$. Therefore, problem \eqref{s2} is related solely to the semantic compression ratio.

However, the difficulty in solving problem \eqref{s2} still exists due to the non-convexity of the objective function and the non-smoothness of the computation load function, $g_n(\rho_n)$. To handle the non-smoothness of $g_n(\rho_n)$, it can be reformulated as
\begin{equation}\label{grho}
    g_n(\rho_n)=\sum_{s=1}^{S} \theta_{ns}(A_{ns}\rho_n+B_{ns}),\theta_{ns}\in \{0,1\},\sum_{s=1}^{S}\theta_{ns}=1,
\end{equation}
where $S$ is the number of segments of the piecewise function $g_n(\rho_n)$, and $\theta_{ns}$ identifies the specific segment within which $\rho_n$ falls.

Therefore, problem \eqref{s2} can be rewritten as
\begin{subequations}\label{s22}
    \begin{align}
        \max_{\bm{\Theta},\bm{\rho}} \quad & \sum_{n=1}^N \frac{1}{\rho_n}\log_2\Bigg(1+\notag\\
        & \frac{U_{nn} \left[p_n^\mathrm{max}-p_0\sum\limits_{s=1}^{S} \theta_{ns}(A_{ns}\rho_n+B_{ns})\right]}{\sum\limits^{N}_{k=1,k\neq n}U_{nk} \left[p_k^\mathrm{max}-p_0\sum\limits_{s=1}^{S} \theta_{ks}(A_{ks}\rho_k+B_{ks})\right] + v_n}\Bigg), \tag{\ref{s22}}\\
        \textrm{s.t.} \quad & \sum_{s=1}^{S} \theta_{ns}(A_{ns}\rho_n+B_{ns})\leq \frac{p_n^\mathrm{max}}{p_0},\forall n\in\mathcal{N}, \label{c221}\\
        & \rho_{n}^\mathrm{min}\leq\rho_n\leq 1,\forall n\in\mathcal{N}, \label{c222}\\
        & \sum_{s=1}^{S}\theta_{ns}=1,\forall n\in\mathcal{N}, \label{c223}\\
        & \theta_{ns}\in \{0,1\},\forall n\in\mathcal{N}, \label{c224}
    \end{align}
\end{subequations}
where $\bm{\Theta}=[\bm{\theta}_1,\bm{\theta}_2,\cdots,\bm{\theta}_N]$, and $\bm{\theta}_n=[\theta_{n1},\theta_{n2},\cdots,\theta_{nS}]^\mathrm{T}$.

In problem \eqref{s22}, both binary integer matrix $\boldsymbol \Theta$ and continuous variable $\boldsymbol \rho$ are involved. Thus, problem \eqref{s22} becomes a challenging mixed-integer programming problem.

It is important to note that $\bm{\Theta}$ and $\bm{\rho}$ are highly coupled in objective function \eqref{s22} and constraint \eqref{c221}. If $\bm{\rho}$ is determined, then so is $\bm{\Theta}$. However, a determined $\bm{\Theta}$ cannot result in a determined $\bm{\rho}$, but it can narrow down the possible range of $\bm{\rho}$ by specifying the particular segment in which $\bm{\rho}$ exists.

Therefore, we obtain an approximate estimation of the semantic compression ratio $\bm{\rho}$ by determining $\bm{\Theta}$ as follows.

For convenience, we define
\begin{equation}\label{midrho}
    \rho_{ns}=\frac{L_{n(s-1)}+L_{ns}}{2},1\leq s\leq S,
\end{equation}
which represents the middle value of the semantic compression ratio in segment $s$ for user $n$. 

We can see that $\rho_{ns}$ is a fixed value denoting the midpoint of segment $s$ in $g_n(\rho_n)$. Therefore, we use $\rho_{ns}$ for approximating the value of $\rho_n$ in every segment $s$. By making this approximation, problem \eqref{s22} can be simplified as
\begin{subequations}\label{s23}
    \begin{align}
        \max_{\bm{\Theta}} \quad & \sum_{n=1}^N \frac{1}{\sum_{s=1}^{S}\theta_{ns}\rho_{ns}}\log_2\Bigg(1+\notag\\
        & \frac{U_{nn} \left[p_n^\mathrm{max}-p_0\sum\limits_{s=1}^{S} \theta_{ns}(A_{ns}\rho_{ns}+B_{ns})\right]}{\sum\limits^{N}_{k=1,k\neq n}U_{nk} \left[p_k^\mathrm{max}-p_0\sum\limits_{s=1}^{S} \theta_{ks}(A_{ks}\rho_{ks}+B_{ks})\right] + v_n}\Bigg), \tag{\ref{s23}}\\
        \textrm{s.t.} \quad & \sum_{s=1}^{S} \theta_{ns}(A_{ns}\rho_{ns}+B_{ns})\leq \frac{p_n^\mathrm{max}}{p_0},\forall n\in\mathcal{N}, \label{c231}\\
        & \sum_{s=1}^{S}\theta_{ns}=1,\forall n\in\mathcal{N}, \label{c232}\\
        & \theta_{ns}\in \{0,1\},\forall n\in\mathcal{N}. \label{c233}
    \end{align}
\end{subequations}
Problem \eqref{s23} is an integer programming problem with respect to the Boolean matrix $\bm{\Theta}$.

Since the objective function of problem \eqref{s23} remains intractable and challenging to convert into a convex function, we present an AO method to iteratively determine the integer matrix $\bm{\Theta}$.

With given semantic compression ratio level indicating vectors of other $N-1$ users, we need to determine the optimal $\bm{\theta}_n$ for the current user $n$. Then, we can have the following problem
\begin{subequations}\label{s24}
    \begin{align}
        \max_{\bm{\theta}_n} \quad & \sum_{n=1}^N \frac{1}{\sum_{s=1}^{S}\theta_{ns}\rho_{ns}}\log_2\Bigg(1+\notag\\
        & \frac{U_{nn} \left[p_n^\mathrm{max}-p_0\sum\limits_{s=1}^{S} \theta_{ns}(A_{ns}\rho_{ns}+B_{ns})\right]}{\sum\limits^{N}_{k=1,k\neq n}U_{nk} \left[p_k^\mathrm{max}-p_0\sum\limits_{s=1}^{S} \theta_{ks}(A_{ks}\rho_{ks}+B_{ks})\right] + v_n}\Bigg), \tag{\ref{s24}}\\
        \textrm{s.t.} \quad & \sum_{s=1}^{S} \theta_{ns}(A_{ns}\rho_{ns}+B_{ns})\leq \frac{p_n^\mathrm{max}}{p_0},\forall n\in\mathcal{N}, \label{c241}\\
        & \sum_{s=1}^{S}\theta_{ns}=1,\forall n\in\mathcal{N}, \label{c242}\\
        & \theta_{ns}\in \{0,1\},\forall n\in\mathcal{N}. \label{c243}
    \end{align}
\end{subequations}

Since $\bm{\theta}_n$ is a one-hot vector of size $S\times 1$, we can simply iterate through all the possible locations where `1' could occur, which has $S$ possibilities. The $\bm{\theta}_n$ corresponding to the maximum objective function value is saved for subsequent iterations.

The iteration terminates when the objective function value of problem \eqref{s24} converges or the iteration count reaches the maximum limit of $I^\mathrm{max}$.
Algorithm \ref{algo1} summarizes the AO method for solving the integer programming problem \eqref{s23}.

\begin{algorithm}[ht]
\caption{Alternating Optimization for Determining Integer Matrix $\bm{\Theta}$}\label{algo1}
\begin{algorithmic}[1]
    \STATE Initialize $\bm{\Theta}^{(0)}$. Set iteration index $i=0$.
    \REPEAT
        \FOR{$n=1$ to $N$}
            \FOR{$s=1$ to $S$}
            \IF{Constraint \eqref{c241} is satisfied}
            \STATE Calculate the objective value for $\theta_{ns}=1$, $\theta_{nt}=0$, $\forall t\neq s$.
            \ELSE
            \STATE Set the objective value as zero.
            \ENDIF
            \ENDFOR
        \STATE Update $\bm{\theta}_n$ which corresponds to the maximum objective value.
        \ENDFOR
        \STATE Obtain $\bm{\Theta}^{(i+1)}$.
        \STATE Set $i=i+1$.
    \UNTIL{the objective value of problem \eqref{pf} converges or $i>I^\mathrm{max}$.}
    \STATE \textbf{Output}: The optimized Boolean matrix $\bm{\Theta}$.
\end{algorithmic}
\end{algorithm}

In this stage, the transmit power $\mathbf{p}$ is substituted with the semantic compression ratio $\bm{\rho}$ according to Theorem \ref{theorem1}. Furthermore, the matrix $\bm{\Theta}$, which determines the range of $\rho_n$ for each user, is optimized employing the AO method. Next, we need to perform a refined search for the semantic compression ratio $\bm{\rho}$.

\subsection{Stage 3: Refined Search for Semantic Compression Ratio}
To achieve an accurate value for the semantic compression ratio, a refined search is required in stage 3. This is because the result obtained in stage 2 is only an approximate estimate of the semantic compression ratio.

Based on the Boolean matrix $\bm{\Theta}$ obtained in stage 2, we can determine the segment in which $\bm{\rho}$ falls. Denote the selected segment for user $n$ by $S_n$, which means
\begin{equation}\label{segment}
    g_n(\rho_n)=A_{n(S_n)}\rho_n+B_{n(S_n)},L_{n(S_n)}\leq\rho_n\leq L_{n(S_n-1)}.
\end{equation}
Once the segment of $\rho_n$ is determined, the computation load function $g_n(\rho_n)$ becomes a linear function instead of a non-smooth piecewise function.

Therefore, the problem need to solve in stage 3 can be reformulated as
\begin{subequations}\label{s3}
    \begin{align}
        \max_{\bm{\rho}} \quad & \sum_{n=1}^N \frac{1}{\rho_n}\log_2\Bigg(1+\notag\\
        & \frac{U_{nn} \left[p_n^\mathrm{max}-p_0\left(A_{n(S_n)}\rho_n+B_{n(S_n)}\right)\right]}{\sum\limits^{N}_{k=1,k\neq n}U_{nk} \left[p_k^\mathrm{max}-p_0\left(A_{k(S_k)}\rho_k+B_{k(S_k)}\right)\right] + v_n}\Bigg), \tag{\ref{s3}}\\
        \textrm{s.t.} \quad & A_{n(S_n)}\rho_n+B_{n(S_n)}\leq \frac{p_n^\mathrm{max}}{p_0},\forall n\in\mathcal{N}, \label{c31}\\
        & L_{(S_n)}\leq\rho_n\leq L_{n(S_n-1)},\forall n\in\mathcal{N}. \label{c32}
    \end{align}
\end{subequations}
Problem \eqref{s3} is no longer non-smooth as the piecewise function $g_n(\rho_n)$ has been degraded to a linear function. However, problem \eqref{s3} remains non-convex as the objective function is highly non-convex with respect to $\bm{\rho}$. Thus, it is generally hard to obtain the globally optimal solution for problem \eqref{s3}. Next, we employ the gradient ascent method to obtain a suboptimal solution.

For convenience, we define
\begin{subequations}\label{obj}
    \begin{multline}
        f\left(\bm{\rho}\right)=\sum_{n=1}^N \frac{1}{\rho_n}\log_2\Bigg(1+\notag\\
            \frac{U_{nn} \left[p_n^\mathrm{max}-p_0\left(A_{n(S_n)}\rho_n+B_{n(S_n)}\right)\right]}{\sum\limits^{N}_{k=1,k\neq n}U_{nk} \left[p_k^\mathrm{max}-p_0\left(A_{k(S_k)}\rho_k+B_{k(S_k)}\right)\right] + v_n}\Bigg),\tag{\ref{obj}}
    \end{multline}
\end{subequations}
which is the objective function of problem \eqref{s3}. Note that it is only related to the semantic compression ratio $\bm{\rho}$.

Thus, problem \eqref{s3} can be rewritten as
\begin{subequations}\label{s32}
    \begin{align}
        \max_{\bm{\rho}} \quad & f\left(\bm{\rho}\right), \tag{\ref{s32}}\\
        \textrm{s.t.} \quad & \rho_n\geq \frac{\left(p_n^\mathrm{max}/p_0\right)-B_{n(S_n)}}{A_{n(S_n)}},\forall n\in\mathcal{N}, \label{c321}\\
        & L_{n(S_n)}\leq\rho_n\leq L_{n(S_n-1)},\forall n\in\mathcal{N}. \label{c322}
    \end{align}
\end{subequations}

To begin, set the initial semantic compression ratio as
\begin{equation}\label{is}
    \bm{\rho}^{(0)}=\left[\rho_{1(S_1)},\rho_{2(S_2)},\cdots,\rho_{N(S_N)}\right].
\end{equation}

Let $\bm{\rho}^{(t-1)}$ denote the semantic compression ratio obtained in the $(t-1)$-th iteration. Subsequently, we can calculate the gradient of the objective function $f\left(\bm{\rho}\right)$ at $\bm{\rho}^{(t-1)}$ according to the definition, i.e.,
\begin{align}\label{grad}
    \left[\nabla_{\bm{\rho}}f\left(\bm{\rho}^{(t-1)}\right)\right]_n &=\frac{\partial f\left(\bm{\rho}\right)}{\partial \left[\bm{\rho}\right]_n}\Bigg|_{\bm{\rho}=\bm{\rho}^{(t-1)}}\notag\\
    &=\lim_{\delta\to 0}\frac{f\left(\bm{\rho}^{(t-1)}+\delta\mathbf{o}^n_N\right)-f\left(\bm{\rho}^{(t-1)}\right)}{\delta},
\end{align}
where $\mathbf{o}^n_N$ is a Boolean vector of size $N\times 1$ with $[\mathbf{o}^n_N]_n=1$ and $[\mathbf{o}^n_N]_m=0,m\neq n$.

Then, we can update $\bm{\rho}^{(t)}$ in the $t$-th iteration towards the gradient ascent direction for a higher $f\left(\bm{\rho}\right)$. The update strategy can be written as
\begin{equation}\label{update}
    \bm{\rho}^{(t)}=\mathcal{B}\left\{\bm{\rho}^{(t-1)}+\tau^{(t)}\nabla_{\bm{\rho}}f\left(\bm{\rho}^{(t-1)}\right)\right\},
\end{equation}
where $\tau^{(t)}$ represents the step size in the $t$-th iteration, and $\mathcal{B}\left\{\bm{\rho}\right\}$ refers to a boundary function which ensures that the semantic compression ratio stays within the range determined by constraints \eqref{c321} and \eqref{c322}. Specifically, the boundary function $\mathcal{B}\left\{\bm{\rho}\right\}$ can be expressed as
\begin{equation}\label{boundary}
    \left[\mathcal{B}\left\{\bm{\rho}\right\}\right]_n=\left\{
    \begin{aligned}
        & [\bm{\rho}]_n^\mathrm{min}, &[\bm{\rho}]_n<[\bm{\rho}]_n^\mathrm{min},\\
        & [\bm{\rho}]_n, &[\bm{\rho}]_n^\mathrm{min}\leq [\bm{\rho}]_n\leq [\bm{\rho}]_n^\mathrm{max},\\
        & [\bm{\rho}]_n^\mathrm{max}, &[\bm{\rho}]_n>[\bm{\rho}]_n^\mathrm{max},
    \end{aligned}
    \right.
\end{equation}
where
\begin{equation}\label{min}
    [\bm{\rho}]_n^\mathrm{min}=\max\left\{\frac{\left(p_n^\mathrm{max}/p_0\right)-B_{n(S_n)}}{A_{n(S_n)}},L_{(S_n)}\right\},
\end{equation}
and
\begin{equation}\label{max}
    [\bm{\rho}]_n^\mathrm{max}=L_{n(S_n-1)}.
\end{equation}

Both the convergence rate and the ultimate outcome of the gradient ascent algorithm exhibit pronounced sensitivity to the chosen step size. Oversized step sizes may expedite convergence but risk non-convergence. Conversely, overly small step sizes encourage convergence with more iterations, although resulting in a more optimal solution. Consequently, this paper employs the backtracking linear search method to ascertain a judicious step size. Concretely, within $t$-th iteration, the step size initiates with a sizeable positive value, i.e., $\tau^{(t)}=\Bar{\tau}$, and diminishes gradually by repeating
\begin{equation}\label{tau}
    \tau^{(t)}\leftarrow\alpha\tau^{(t)},\alpha\in (0,1),
\end{equation}
until the Armijo–Goldstein condition is satisfied, expressed as
\begin{equation}\label{ag}
    f\left(\bm{\rho}^{(t)}\right)\geq f\left(\bm{\rho}^{(t-1)}\right)+\xi\tau^{(t)}\left\Vert \nabla_{\bm{\rho}}f\left(\bm{\rho}^{(t-1)}\right)\right\Vert^2_2,
\end{equation}
where $\xi\in(0,1)$ serves as a hyper-parameter regulating the step size magnitude.

The algorithm will terminate when the increase in $f\left(\bm{\rho}\right)$ between the two most recent iterations is less than a very small positive number, denote by $\epsilon$, or the algorithm reaches the maximum iteration limit of $T^\mathrm{max}$. Algorithm \ref{algo2} provides a summary of the gradient ascent algorithm.

\begin{algorithm}[ht]
\caption{Gradient Ascent Algorithm for Refined Search of Semantic Compression Ratio}\label{algo2}
\begin{algorithmic}[1]
    \STATE Initialize $\bm{\rho}^{(0)}$. Set iteration index $t=0$.
    \STATE Obtain $f\left(\bm{\rho}\right)$ according to \eqref{obj}.
    \REPEAT
        \STATE Calculate $\nabla_{\bm{\rho}}f\left(\bm{\rho}^{(t-1)}\right)$ according to \eqref{grad}.
        \STATE Initialize the step size $\tau^{(t)}=\Bar{\tau}$.
        \STATE Update $\bm{\rho}$ according to \eqref{update}.
        \REPEAT
            \STATE Diminish the step size according to \eqref{tau}.
            \STATE Update $\bm{\rho}$ according to \eqref{update}.
        \UNTIL{the Armijo–Goldstein condition \eqref{ag} is satisfied.}
        \STATE Set $t=t+1$.
    \UNTIL{$\left\vert f\left(\bm{\rho}^{(t)}\right)-f\left(\bm{\rho}^{(t-1)}\right)\right\vert<\epsilon$ or $t>T^\mathrm{max}$.}
    \STATE \textbf{Output}: Semantic compression ratio $\bm{\rho}$ for all users.
\end{algorithmic}
\end{algorithm}

In this stage, the non-smooth computation function $g_n(\rho_n)$ is degenerated to a linear function according to the Boolean matrix $\bm{\Theta}$ obtained in stage 2. Then, a gradient ascent algorithm is employed to tackle the non-convex problem \eqref{s3}. This stage outputs the refined semantic compression ratio $\bm{\rho}$ for all users.

\subsection{Algorithm Analysis}
The overall joint transmission and computation resource allocation algorithm for the multi-user PSC network is presented in Algorithm \ref{algo3}. Algorithm \ref{algo3} consists of three stages that are executed sequentially. Therefore, the overall complexity of Algorithm \ref{algo3} can be calculated as $\mathcal{O}(\text{Stage 1}) + \mathcal{O}(\text{Stage 2}) + \mathcal{O}(\text{Stage 3})$, where $\mathcal{O}(\text{Stage }i)$ denotes the computation complexity of stage $i$. The complexity of these three stages is analyzed as follows.

In stage 1, we derive the closed-form solution of the receive beamforming matrix $\mathbf{W}$ using the MMSE strategy. Therefore, the computation complexity of stage 1 lies in computing $\mathbf{W}$. To compute $\mathbf{W}$, we need to perform four matrix multiplications and one matrix inversion. Hence, the computation complexity of stage 1 can be expressed as $\mathcal{O}(MN^2 + M^2N + M^3)$.

In stage 2, we employ the AO method to obtain the Boolean matrix $\bm{\Theta}$. If we exhaustively search all possibilities of $\bm{\Theta}$, the computation complexity would be $\mathcal{O}(S^N)$, which is infeasible. Although the result obtained by the AO method may not be the globally optimal solution, it significantly reduces the complexity to $\mathcal{O}(I^\text{max}SN)$. In Algorithm \ref{algo1}, the computation complexity for calculating the objective value in line 6 is $\mathcal{O}(N^2)$. Therefore, the computation complexity of stage 2 is $\mathcal{O}(I^\text{max}SN^3)$.

In stage 3, we utilize the gradient ascent algorithm to search for the refined semantic compression ratio $\bm{\rho}$. In Algorithm \ref{algo2}, the computation complexity for calculating the gradient in line 4 is $\mathcal{O}(N^3)$. Let $B^\text{max}$ denote the maximum iterations of the backtracking linear search in lines 7 to 10 of Algorithm \ref{algo2}. Thus, the complexity of Algorithm \ref{algo2} is $\mathcal{O}(B^\text{max}N)$. Consequently, the computation complexity of stage 3 is $\mathcal{O}(T^\text{max}(N^3 + B^\text{max}N))$.

As a result, the total complexity of Algorithm \ref{algo3} can be expressed as $\mathcal{O}(MN^2 + M^2N + M^3 + I^\text{max}SN^3 + T^\text{max}(N^3 + B^\text{max}N))=\mathcal O(M^3+I^\text{max}SN^3)$ since $N\leq M$.

\begin{algorithm}[ht]
\caption{Joint Transmission and Computation Resource Allocation Algorithm for Multi-User PSC Network}\label{algo3}
\begin{algorithmic}[1]
    \STATE Initialize $\mathbf{W}$, $\mathbf{p}$, and $\bm{\rho}$.
    \STATE \textbf{Stage 1:}
        \STATE \quad Update the receive beamforming matrix $\mathbf{W}$ according to \eqref{mmse}.
    \STATE \textbf{Stage 2:}
        \STATE \quad Substitute the transmit power $\mathbf{p}$ with the semantic compression ratio $\bm{\rho}$ according to Theorem \ref{theorem1}.
        \STATE \quad Rewrite $g_n(\rho_n)$ according to \eqref{grho}.
        \STATE \quad Calculate $\rho_{ns}$ according to \eqref{midrho}.
        \STATE \quad Solve problem \eqref{s23} using Algorithm \ref{algo1}.
    \STATE \textbf{Stage 3:}
        \STATE \quad Update $g_n(\rho_n)$ according to \eqref{segment}.
        \STATE \quad Solve problem \eqref{s3} using Algorithm \ref{algo2}.
    \STATE \textbf{Output}: The optimized $\mathbf{W}$, $\mathbf{p}$ and $\bm{\rho}$.
\end{algorithmic}
\end{algorithm}

Since deducing the optimality of problem \eqref{pf} is challenging in theory, obtaining the globally optimal solution would generally require exponential computation complexity, which is unrealistic. Therefore, we propose Algorithm \ref{algo3} to provide a suboptimal solution for problem \eqref{pf} with polynomial computation complexity.

\section{Simulation Results}
In the simulations, the considered PSC network comprises 8 users, while the BS is equipped with 16 antennas. The multiple access channel matrix $\mathbf{H}$ is configured with a long-term channel power gain $\beta$ set to -90 dB, and the noise power is set to -10 dBm. Furthermore, we set the computation power coefficient to 1 and the maximum power limit to 30 dBm. For the semantic information extraction task based on the probability graph, we adopt the same parameters as in \cite{zhao2023joint}. A summary of the main system parameters is provided in Table~\ref{tb1}.

\begin{table}[ht]
\centering
\caption{Main System Parameters}
\begin{tabular}{|c||c||c|}
    \toprule\hline
    \textbf{Parameter} & \textbf{Symbol}  & \textbf{Value} \\
    \hline
    Number of users & $N$ & 8 \\ \hline
    Number of antennas & $M$ & 16 \\ \hline
    Long-term channel power gain & $\beta$ & -90 dB \\ \hline
    Noise power & $\sigma^2$ & -10 dBm \\ \hline
    Computation power coefficient & $p_0$ & 1 \\ \hline
    Maximum power limit & $p_n^\mathrm{max}$ & 30 dBm \\ \hline
    Parameter in \eqref{grad} & $\delta$ & $10^{-9}$ \\ \hline
    Initial step size & $\Bar{\tau}$ & $10^{-3}$ \\ \hline
    Scaling factor in \eqref{tau}& $\alpha$ & 0.5 \\ \hline
    Hyper-parameter in \eqref{ag} & $\xi$ & 0.1 \\ \hline
    Threshold in Algorithm \ref{algo2} & $\epsilon$ & $10^{-6}$ \\ \hline
    Maximum iteration limit in Algorithm \ref{algo2} & $T^\mathrm{max}$ & 1000 \\
    \hline\bottomrule
\end{tabular}
\label{tb1}
\end{table}

The proposed multi-user PSC system, enhanced by the probability graph with joint transmission and computation optimization, is labeled as the `PSC' scheme. For comparisons, we incorporate several benchmark schemes as follows. 
\begin{itemize}
\item \textbf{`Non-semantic':} This benchmark scheme represents a conventional communication approach where the original data is directly transmitted without employing semantic compression. In this scheme, all users' power is allocated solely to transmission, without any optimization for joint transmission and computation.
\item \textbf{`PSC-S2':} This scheme is a simplified version of the `PSC' scheme, where the optimization process is performed only up to stage 2. The final result is the roughly estimated semantic compression ratio obtained from this stage.
\item \textbf{`PSC-ZF':} In this scheme, the ZF strategy is employed at stage 1. This means that the receive beamforming matrix $\mathbf{W}$ is calculated as $\mathbf{W}=\mathbf{H}(\mathbf{H}^\mathrm{H}\mathbf{H})^{-1}$. The remaining stages are the same with the `PSC' scheme.
\end{itemize}

\begin{figure}[t]
\centering
\includegraphics[width=\linewidth]{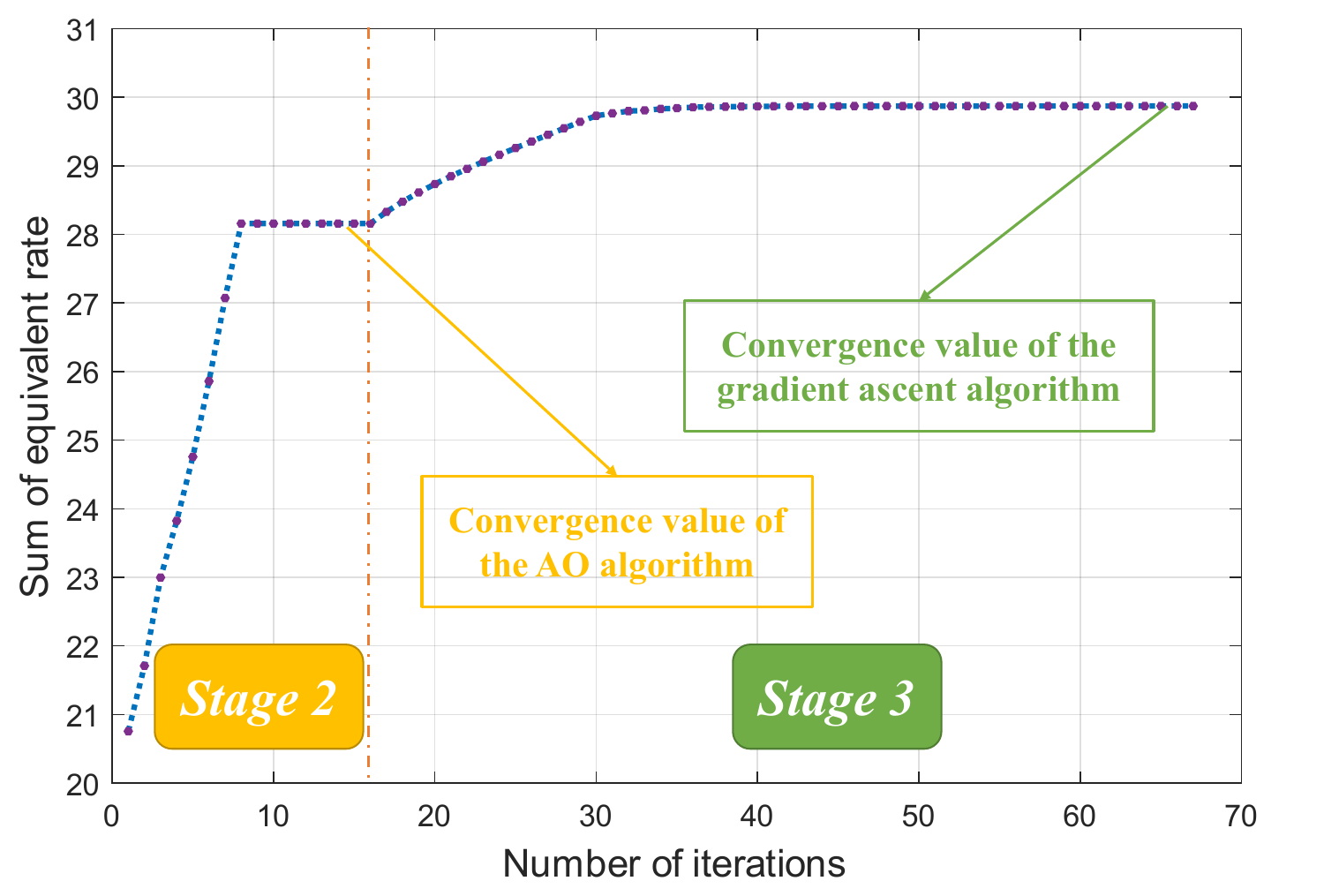}
\caption{Sum of equivalent rate vs. number of iterations.}
\label{fg:convergence}
\end{figure}

In Fig.~\ref{fg:convergence}, we assess the convergence of the proposed `PSC' scheme. Two convergent platforms are discernible: the first pertains to the AO algorithm, while the second corresponds to the gradient ascent algorithm. During stage~2, the objective value exhibits rapid ascent and subsequent convergence. This can be attributed to the fact that, in this stage, the AO algorithm addresses an integer programming problem with a discrete and relatively small variable space. Upon the convergence of the AO algorithm, the `PSC' scheme progresses to stage 3, wherein the gradient ascent algorithm is activated. In stage 3, the objective function converges to a value higher than that achieved in stage 2. This observation serves as validation for the effectiveness of the gradient ascent algorithm. Throughout the iterative process, the objective value steadily increases, eventually reaching a highly stable value. This outcome substantiates the efficacy of the comprehensive algorithm design.

\begin{figure}[t]
\centering
\includegraphics[width=\linewidth]{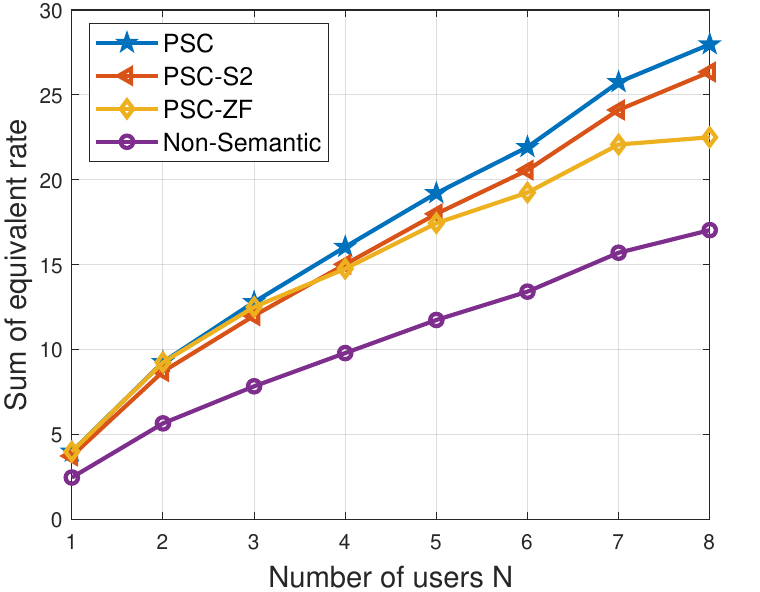}
\caption{Sum of equivalent rate vs. number of users.}
\label{fg:n}
\end{figure}

In Fig.~\ref{fg:n}, the correlation between the sum of equivalent rate and the number of users is depicted. The figure reveals a consistent increase in the sum of equivalent rate across all schemes as the number of users increases. However, it is observed that this increase does not follow a linear trend with a slope of one. Specifically, when $N=8$, the sum of equivalent rate is found to be less than twice as high as that when $N=4$ within the same scheme. This phenomenon is attributed to the emergence of inter-user interference at the receiver. Furthermore, the growth rate of the `PSC' scheme surpasses that of the `PSC-ZF' scheme, indicating that the MMSE strategy outperforms the ZF strategy in the examined scenario. It is important to emphasize that, consistently, the `PSC' scheme demonstrates the highest performance, while the sum rate of the `Non-semantic' scheme consistently remains the lowest.

\begin{figure}[t]
\centering
\includegraphics[width=\linewidth]{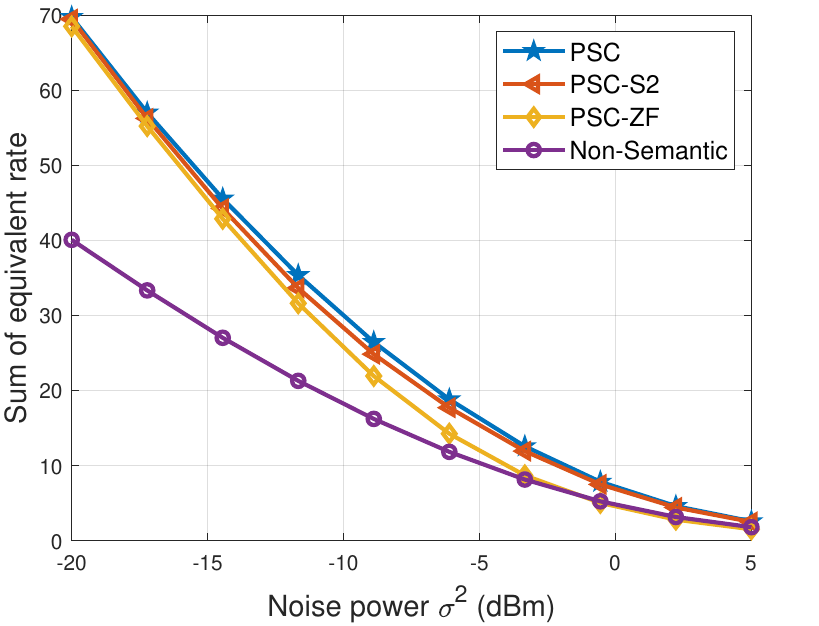}
\caption{Sum of equivalent rate vs. noise power.}
\label{fg:np}
\end{figure}

In Fig.~\ref{fg:np}, the variation of the sum of equivalent rate with changing noise power is illustrated. The figure highlights a consistent decrease in the sum of equivalent rate across all schemes as the noise power increases. When the noise power is small, the performance of the `PSC' scheme and the `PSC-ZF' scheme is comparable, suggesting that the ZF strategy is more effective in low-noise environments. It is important to note that, theoretically, when the noise power is zero, the formulas for both MMSE and ZF strategies yield identical results. However, in real-world scenarios, complete absence of noise is implausible. Consequently, the superiority of the MMSE strategy over the ZF strategy becomes evident as noise power increases. This is demonstrated in Fig.~\ref{fg:np}, where the `PSC' scheme consistently outperforms the `PSC-ZF' scheme across various noise power levels, affirming the general superiority of the MMSE strategy.

\begin{figure}[t]
\centering
\includegraphics[width=\linewidth]{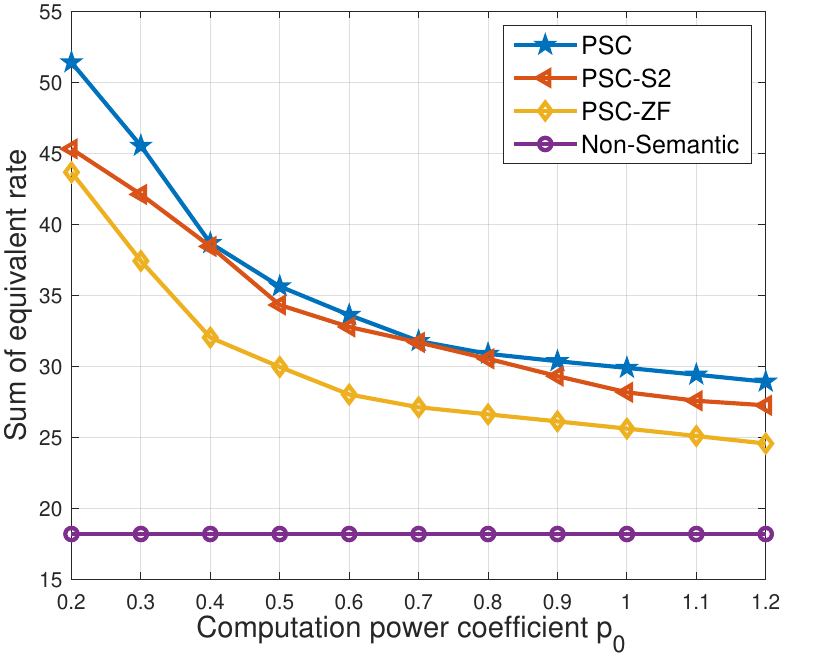}
\caption{Sum of equivalent rate vs. computation power coefficient.}
\label{fg:p0}
\end{figure}

In Fig.~\ref{fg:p0}, the relationship between the sum of equivalent rate and the computation power coefficient is depicted. Notably, the `Non-semantic' scheme maintains a constant sum of equivalent rate across different $p_0$ values due to its lack of utilization of semantic communication techniques, and consistently exhibiting the lowest performance among the considered schemes. As the computation power coefficient decreases, the sum of equivalent rate for the other three schemes increases. This trend is attributed to the enhanced efficiency in computation with lower $p_0$, facilitating a lower semantic compression ratio. Consequently, a higher sum of equivalent rate is achieved. It is found that the `PSC-S2' scheme exhibits variable proximity to the `PSC' scheme, illustrating a dynamic relationship. A small gap between the two indicates that the solution of the `PSC' scheme closely aligns with the midpoint solution of the `PSC-S2' scheme. Moreover, the sum of equivalent rate for the `PSC-S2' scheme demonstrates a segmented function concerning the computation power coefficient $p_0$. This behavior arises because the solution of the `PSC-S2' scheme jumps to the midpoint of another segment of the computation load function $g_n(\rho_n)$ only when $p_0$ changes significantly.

\begin{figure}[t]
\centering
\includegraphics[width=\linewidth]{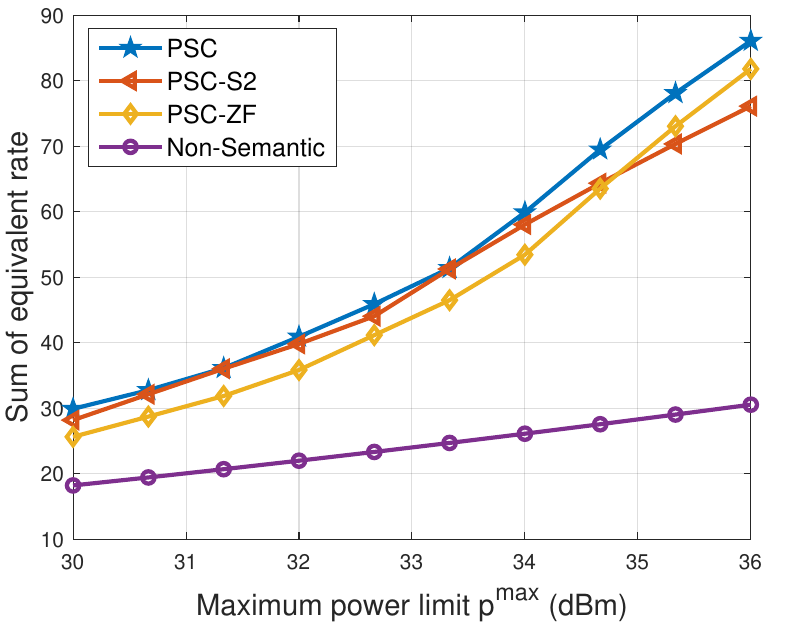}
\caption{Sum of equivalent rate vs. maximum power limit.}
\label{fg:pmax}
\end{figure}

In Fig.~\ref{fg:pmax}, the evolution of the sum of equivalent rate is traced across varying maximum power limits. A consistent upward trajectory is observed for all schemes as the maximum power limit increases. This behavior is a direct consequence of the positive correlation between augmented power levels and increased achievable rates for all users. Distinctly, in comparison to the `Non-semantic' scheme, the advantages of the `PSC' scheme become more pronounced with higher maximum power limits $p_n^\mathrm{max}$. This enhancement can be attributed to the `PSC' scheme's ability to allocate more power to semantic compression as the maximum power limit increases. The reduction in data size achieved through semantic compression significantly contributes to the overall sum of equivalent rate. Conversely, the `Non-semantic' scheme can only allocate all power to transmission, which does not contribute as significantly to the sum of equivalent rate. Consequently, the proposed `PSC' scheme exhibits substantial superiority when there is sufficient power.

\begin{figure}[t]
\centering
\includegraphics[width=\linewidth]{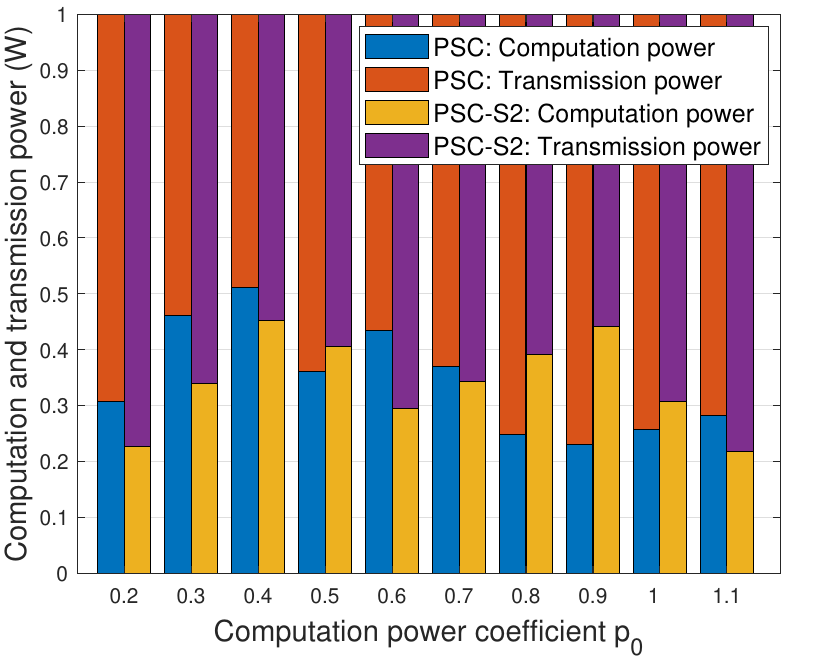}
\caption{The allocation of the computation power and transmission power with different computation power coefficient.}
\label{fg:power}
\end{figure}

To depict the allocation of computation power and transmission power within the considered network, Fig.~\ref{fg:power} illustrates the distribution in both the `PSC' and `PSC-S2' schemes across various computation power coefficients. It can be seen that the sum of computation power and transmission power consistently equals the predefined maximum power limit $p_n^\mathrm{max}$, set at 30 dBm. This figure reveals no discernible pattern in the variation of computation power with respect to $p_0$, and the computation power of the `PSC-S2' scheme fluctuates, at times surpassing and at other times falling below that of the `PSC' scheme. This variability underscores the inherent challenge in achieving a balance between transmission and computation within the considered PSC network.

\section{Conclusion}
This paper has introduced the PSC network, a novel paradigm where multiple users employ semantic information extraction techniques to compress extensive original data before transmission to a multi-antenna BS. Our model represents large-sized data through comprehensive knowledge graphs, utilizing a shared probability graph between users and the BS to facilitate efficient semantic compression. We formulated an optimization problem aimed at maximizing the sum of equivalent rate for all users, while considering total power constraints and semantic requirements. To tackle the non-convex and non-smooth nature of the optimization problem, we proposed a three-stage algorithm. This algorithm determines the receive beamforming matrix of the BS, transmit power, and semantic compression ratio for each user step by step. Numerical results underscore the effectiveness of our proposed scheme, emphasizing its ability to achieve a harmonious equilibrium between transmission and computation.

In future research, we plan to extend our exploration of resource management in the PSC network to diverse scenarios, such as unmanned aerial vehicle (UAV) networks, near-field communications, and other relevant domains. Additionally, considering the uniform computation power coefficient for every user in this study, it is worth investigating the performance of the PSC network among computing-heterogeneous devices. These avenues present interesting directions for future research in the PSC network.

\begin{appendices}
\section{Proof of Lemma \ref{lemma1}}
The received signals at the BS without beamforming can be expressed as
\begin{equation}\label{yh}
    \hat{\mathbf{y}}=\mathbf{H}\mathbf{x}+\mathbf{n},
\end{equation}
which means $\mathbf{y}=\mathbf{W}^\mathrm{H}\hat{\mathbf{y}}$ based on \eqref{rs} and \eqref{yh}.

The goal of the MMSE strategy is to minimize the mean square error (MSE) between the transmitted signals $\mathbf{x}$ and the received signals $\mathbf{y}$. The error between $\mathbf{x}$ and $\mathbf{y}$ is
\begin{equation}\label{e}
    \mathbf{e} = \mathbf{y}-\mathbf{x}=\mathbf{W}^\mathrm{H}\hat{\mathbf{y}}-\mathbf{x}.
\end{equation}
To minimize the MSE between $\mathbf{x}$ and $\mathbf{y}$, represented by $\mathbb{E}\left\{\mathbf{e}^\mathrm{H}\mathbf{e}\right\}$, where $\mathbb{E}\left\{\mathbf{\cdot}\right\}$ denotes the expected value of
a random variable, the following condition must be satisfied
\begin{equation}\label{cond}
    \mathbb{E}\left\{\mathbf{e}\hat{\mathbf{y}}^\mathrm{H}\right\}=\mathbf{0},
\end{equation}
which means there is no correlation between $\hat{\mathbf{y}}$ and $\mathbf{e}$. Condition \eqref{cond} is equivalent to the condition that minimizes $\mathbb{E}\left\{\mathbf{e}^\mathrm{H}\mathbf{e}\right\}$, because if the correlation between $\hat{\mathbf{y}}$ and $\mathbf{e}$ is non-zero, it can still be used to decrease $\mathbb{E}\left\{\mathbf{e}^\mathrm{H}\mathbf{e}\right\}$.

Substituting \eqref{e} into \eqref{cond}, we have
\begin{equation}\label{cond2}
    \mathbb{E}\left\{(\mathbf{W}^\mathrm{H}\hat{\mathbf{y}}-\mathbf{x})\hat{\mathbf{y}}^\mathrm{H}\right\}=\mathbf{0},
\end{equation}
which is equivalent to
\begin{equation}\label{cond3}
    \mathbf{W}^\mathrm{H}\mathbb{E}\left\{\hat{\mathbf{y}}\hat{\mathbf{y}}^\mathrm{H}\right\}-\mathbb{E}\left\{\mathbf{x}\hat{\mathbf{y}}^\mathrm{H}\right\}=\mathbf{0}.
\end{equation}
According to \eqref{cond3}, we have
\begin{equation}\label{cond4}
    \mathbf{W}^\mathrm{H}=\mathbb{E}\left\{\mathbf{x}\hat{\mathbf{y}}^\mathrm{H}\right\}\mathbb{E}\left\{\hat{\mathbf{y}}\hat{\mathbf{y}}^\mathrm{H}\right\}^{-1}.
\end{equation}

Let us deal with $\mathbb{E}\left\{\mathbf{x}\hat{\mathbf{y}}^\mathrm{H}\right\}$ first. Substituting \eqref{yh} into $\mathbb{E}\left\{\mathbf{x}\hat{\mathbf{y}}^\mathrm{H}\right\}$, we obtain
\begin{equation}\label{xy1}
    \mathbb{E}\left\{\mathbf{x}\hat{\mathbf{y}}^\mathrm{H}\right\}=\mathbb{E}\left\{\mathbf{x}\left(\mathbf{H}\mathbf{x}+\mathbf{n}\right)^\mathrm{H}\right\}=\mathbb{E}\left\{\mathbf{x}\mathbf{x}^\mathrm{H}\mathbf{H}^\mathrm{H}+\mathbf{x}\mathbf{n}^\mathrm{H}\right\}.
\end{equation}
Since there is no correlation between the transmitted signals $\mathbf{x}$ and the noise $\mathbf{n}$, i.e., $\mathbb{E}\left\{\mathbf{x}\mathbf{n}^\mathrm{H}\right\}=\mathbf{0}$, we have
\begin{equation}\label{xy3}
    \mathbb{E}\left\{\mathbf{x}\hat{\mathbf{y}}^\mathrm{H}\right\}=\mathbb{E}\left\{\mathbf{x}\mathbf{x}^\mathrm{H}\right\}\mathbf{H}^\mathrm{H}=\mathbf{P}\mathbf{H}^\mathrm{H}.
\end{equation}

Following the similar procedure, we can obtain
\begin{equation}\label{yy}
    \mathbb{E}\left\{\hat{\mathbf{y}}\hat{\mathbf{y}}^\mathrm{H}\right\}=\mathbf{H}\mathbb{E}\left\{\mathbf{x}\mathbf{x}^\mathrm{H}\right\}\mathbf{H}^\mathrm{H}+\mathbb{E}\left\{\mathbf{n}\mathbf{n}^\mathrm{H}\right\}=\mathbf{H}\mathbf{P}\mathbf{H}^\mathrm{H}+\sigma^2\mathbf{I}_M.
\end{equation}

Now, substituting \eqref{xy3} and \eqref{yy} into \eqref{cond4}, we have
\begin{equation}\label{wh}
    \mathbf{W}^\mathrm{H}=\mathbf{P}\mathbf{H}^\mathrm{H}\left(\mathbf{H}\mathbf{P}\mathbf{H}^\mathrm{H}+\sigma^2\mathbf{I}_M\right)^{-1},
\end{equation}
which is equivalent to
\begin{equation}\label{wp}
    \mathbf{W}=\left(\mathbf{H}\mathbf{P}\mathbf{H}^\mathrm{H}+\sigma^2\mathbf{I}_M\right)^{-1}\mathbf{H}\mathbf{P}.
\end{equation}
From \eqref{wp}, the obtained receive beamforming matrix is associated with the transmit power $\mathbf{P}$.
$\hfill\square$

\section{Proof of Theorem \ref{theorem1}}
Theorem \ref{theorem1} can be proved by the contradiction method. If there exists a user $n$ such that 
\begin{equation}\label{tp1}
    p_n^\mathrm{t}+g_n(\rho_n)p_0 < p_n^\mathrm{max}.
\end{equation}
Then, for user $n$, we can always decrease its semantic compression ratio $\rho_n$ due to \eqref{assumption} and constraint \eqref{c12}.

It is evident that the objective function of problem \eqref{s1} decreases monotonically for $\rho_n$, indicating that a lower semantic compression ratio $\rho_n$ produces a higher value of the objective function in problem \eqref{s1}. Therefore, when the objective function of problem \eqref{s1} reaches its maximum, the semantic compression ratio $\rho_n$ and transmit power $p_n^\mathrm{t}$ of each user must satisfy
\begin{equation}\label{tp2}
    p_n^\mathrm{t}+g_n(\rho_n)p_0=p_n^\mathrm{max},\forall n\in\mathcal{N}.
\end{equation}

Hence, Theorem \ref{theorem1} is proved.
$\hfill\square$
\end{appendices}

\bibliographystyle{IEEEtran}
\bibliography{main}


\end{document}